\newtheorem{theorem}{Theorem}
\newcounter{cond}
\newcounter{rema}
\newtheorem{remark}[rema]{Remark}
\newcounter{exam}
\DeclareFontFamily{OT1}{pzc}{}
\DeclareFontShape{OT1}{pzc}{m}{it}%
              {<-> s * [1.3] pzcmi7t}{}
\DeclareMathAlphabet{\mathpzc}{OT1}{pzc}%
                                 {m}{it}
\newcommand{\CN}{\mathcal{N}}
\newcommand{\CH}{\mathcal{H}}
\newcommand{\Cc}{\mathpzc{c}}
\newcommand{\CG}{\mathcal{G}}
\newcommand{\Cs}{\mathpzc{s}}
\newcommand{\Bd}{\boldsymbol{d}}
\newcommand{\Br}{\boldsymbol{r}}
\newcommand{\Bt}{\boldsymbol{t}}
\newcommand{\Bg}{\boldsymbol{g}}
\newcommand{\Bpi}{\boldsymbol{\pi}}
\begin{document}




\title{{\huge Polynomial Complexity Minimum-Time Scheduling \\ in a Class of Wireless Networks}}


\author[]{Qing He}
\author[1]{Vangelis Angelakis}
\author[1,2]{Anthony Ephremides}
\author[1]{Di Yuan}
\affil[1]{Department of Science and Technology, Link{\"o}ping University, Sweden}
\affil[2]{Department of Electrical and Computer Engineering,
University of Maryland, USA}
\maketitle

\begin{abstract}

We consider a wireless network with a set of transmitter-receiver pairs, or links, that share a common channel, and address the problem of emptying finite traffic volume from the transmitters in minimum time. This, so called, minimum-time scheduling problem has been proved to be ${\cal NP}$-hard in general. In this paper, we study a class of minimum-time scheduling problems in which the link rates have a particular structure consistent with the assumed environment and topology. We show that global optimality can be reached in polynomial time and derive optimality conditions. Then we consider a more general case in which we apply the same approach and thus obtain approximation as well as lower and upper bounds to the optimal solution. Simulation results confirm and validate our approach.

\emph{Index Terms--} algorithm, interference, optimality, scheduling, wireless networks.


\end{abstract}

\let\thefootnote\relax\footnote 
{This work is supported in part by the Excellence Center at Link{\"o}ping-Lund in Information Technology, in part by the Swedish Research Council, in part by the EC Marie Curie Actions Projects MESH-WISE under Grant FP7-PEOPLE-2012-IAPP:324515 and Career LTE under Grant FP7-PEOPLE-2013-IOF:329313, and in part by the National Science Foundation under Grant CCF-0728966.

Q. He, V. Angelakis and D.Yuan are with the Department of Science and Technology, Link{\"o}ping University, Norrk{\"o}ping SE-60174, Sweden. E-mail: qing.he@liu.se; vanan@itn.liu.se; diyua@itn.liu.se. Tel: (+46)11363622; (+46)11363005; (+46)11363192.

A. Ephremides is with the Department of Science and Technology, Link{\"o}ping University, Norrk{\"o}ping SE-60174, Sweden, and also with the Department of Electrical and Computer Engineering, University of Maryland, College Park, MD 20742 USA. E-mail: etony@umd.edu. Tel: (+1)301405364.}

\section{Introduction}
\label{sec:introduction}

In a wireless network with a shared multiple access channel, the minimum-time scheduling problem amounts to determining which links are allowed to transmit simultaneously, and for how long, so that a given finite traffic volume at the sources can be delivered in minimum time. We consider the case in which all the links can be divided into several clusters, and in each cluster the activated links transmit at the same rate, which is determined by the numbers of simultaneously activated links in respective clusters. We illustrate in Figure \ref{fig:rn} an example scenario that corresponds to this case. Suppose a base station is located at a central point and the users are distributed along co-centric circles around it. Then users on the same circle have equal distance to the base station and hence identical geometric channel gain if the propagation environment is isotropic.

\begin{figure} [ht!]	
\centering
{\includegraphics[width=99mm]{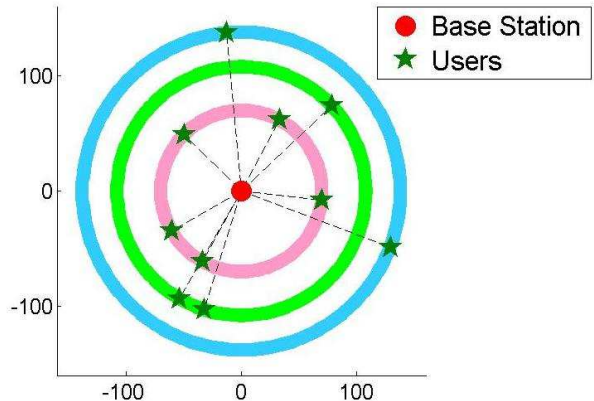}}
\vspace{-2mm}
\caption{An example wireless network}
\label{fig:rn}
\end{figure}

Assume all the users on the same circle transmit with the same power, we refer to the links with users on the same circle as a cluster. The links of the same cluster transmit at the same rate but that rate does depend on how many of them transmit at the same time. The reason is that, the total interference a link experiences only depends on how many links in each cluster are enabled instead of the individual elements of the activated set. It is worth noting that the scenarios corresponding to this case are not restricted to this simple topology. In fact, any case that has such multi-cluster symmetry in link rates is included.

\subsection{Related Work}
The Minimum-Time Scheduling Problem (MTSP), also known as the Minimum-Length Scheduling Problem, represents a fundamental aspect of resource allocation in wireless networks. The scheduling problem has been studied either through classical, so called protocol models \cite{HaSa88,Pr89,StAm90} or through so called physical model (e.g., \cite{BjVaYu03,GrHaNi00}). The latter enables successful decoding in the presence of interference and uses a cross-layer view of transmission rate and access control, thus making it possible to integrate this with general network resource allocation problems (e.g., \cite{BoEp06,CaFiGuYu13,GeNeTa06}).

In solving MTSP with signal-to-interference-and-noise ratio (SINR) constraints, several algorithms have been proposed. Notable amongst them is a column-generation-based solution method that was used in \cite{KoWi10}, which can approach or attain an optimal solution, with the advantage of a potentially reduced complexity. In \cite{PaEp08} it was formulated as a shortest path problem on directed acyclic graphs and the authors obtained suboptimal analytic characterizations. In \cite{isita2}, a modular algorithmic framework was provided to encompass exact as well as sub-optimal, but fast, scheduling algorithms, all under a unified principle design.

Previous complexity analysis \cite{Ar84,BjVaYu04,BoLiXi10,GoPsWa07} concluded the hardness of the scheduling problem with discrete rates, that is, a discrete set corresponding to SINR threshold. In \cite{j12}, 
we proved that the problem remains ${\cal NP}$-hard for most general cases of continuous rate functions. These ealier fundamental results provide the motivation for investigating the case that is on one hand polynomially tractable and on the other hand has high potential to approximate general practical scenarios.

\subsection{Contributions and Organization}
In this paper, our contributions include the following. First, we consider the class of MTSP with multi-cluster cardinality-based rates, proving it can be solved in polynomial time. Then we derive the optimality conditions for the solutions that correspond to emptying each cluster separately. Finally we apply this model to more general scenarios with k-means clustering and propose a column generation algorithm that can either precisely determine the optimal schedule, or provide a good approximation, as well as upper and lower bounds, to the optimum.

The paper is organized as follows. In Section \ref{sec:system}, we discuss the system model of MTSP and introduce the concept of multi-cluster cardinality-based rates, followed by the complexity analysis in Section \ref{sec:complex}. Then the problem decomposition is studied in Section \ref{sec:oc}. In Section \ref{sec:algorithm}, we extend the application of this setup with proposed approach, with the numerical results presented in Section \ref{sec:simulation}, and in Section \ref{sec:conclusion}, we provide final concluding remarks.

\section{System Model}
\label{sec:system}
\subsection{Minimum-Time Scheduling Problem in Wireless Networks}
We consider a wireless network of $N$ transmitter-receiver pairs, or links. Each link $i$ has a finite amount of backlogged data bits $d_i$ at its transmitter's queue. Without loss of generality, assume that the entries in the demand vector $\Bd = (d_1,d_2, ... d_N)$ are in descending order. Let $\CH$ denote the union of all subsets of $\CN$, excluding the empty set. Clearly, $|\CH| = 2^N-1$. We use the term {\bf\emph{group}} to refer to a member $\Cc \in \CH$. Scheduling a group $\Cc$ of the $N$ links means that all links in $\Cc$ are concurrently activated, with a given fixed power, for some positive amount of time $t_\Cc$. For any group $\Cc$, the effective transmit rate $r_{i\Cc}$ of any link $i \in \Cc$ directly depends on the composition of the activation group. 

In all systems with meaningful physical interpretations, the service rate of any link in a group does not increase if the group is augmented, i.e., for any two groups $\Cc \subset \Cc'$ and $i \in \Cc \cap \Cc'$, $r_{i\Cc} \geq r_{i\Cc'}$. We refer to this as the rate monotonicity property.

The MTSP amounts to determining which groups should be selected and their activating durations, so that the queues are emptied in minimum time.
If all the link rates for all $2^N-1$ groups $\Cc \in \CH$ are known, the MTSP accepts the following linear programming (LP) formulation:
\begin{subequations}
\label{eq:lp}
\begin{align}
\min~~ & \sum_{\Cc \in \CH} t_\Cc \label{eq:lpobj}\\
\text{s.~t.}~~ & \sum_{\Cc \in \CH} r_{i\Cc} t_\Cc = d_i ~~i=1, \dots, N \label{eq:lpcons} \\
& \Bt \geq 0
\end{align}
\end{subequations}

Herein $\Bt$ represents the scheduling vector composed of $t_\Cc, \Cc \in \CH$. We use $\Bt^*$ to denote an optimal scheduling solution. $T^* = ||\Bt^*||_1 = \sum_{\Cc \in \CH}t^*_{\Cc}$ stands for the optimal schedule length. Notation $\CH^*$ is reserved for a set of groups that correspond to an optimum solution, that is, $\CH^* = \{\Cc \in \CH: t^*_{\Cc} > 0\}$. 

In some of the analysis later on, we also utilize the LP dual of (\ref{eq:lp}). Letting $\pi_i$ denote the dual variables, the dual formulation is as follows.

\begin{subequations}
\label{eq:dual}
\begin{align}
\max~~ & \sum_{i \in \CN} d_i \pi_i, \label{eq:dualobj}\\
\text{s.~t.}~~ & \sum_{i \in \Cc} r_{i\Cc} \pi_i \leq 1~~\Cc \in \CH, \label{eq:dualcons} \\
& \Bpi \geq 0.
\end{align}
\end{subequations}

\subsection{Multi-Cluster Cardinality-Based Rates}

For the class of MTSP in which the links can be divided into $K$, where $K \leq N$, clusters, we use $n_j$ to denote the number of links in cluster $j, j = 1, 2, ..., K$, with $n_1 + n_2 + ... + n_K = N$.
For each group $\Cc \in \CH$, $\Cc_j$ refers to the set of activated links in cluster $j$. Clearly, $\Cc_j$ is a subset of $\Cc$ and $\Cc = \Cc_1 \bigcup \Cc_2 \bigcup...\bigcup \Cc_K$. 

The link rates in this problem class have the following properties:

\begin{itemize}
\item The links in the same cluster have the same rate if they are activated.
\item The rate value of a link is fully determined by the number of activated links in every cluster and the cluster that the link belongs to, but not on the identity of the links.
\end{itemize}

That is, for any link $i \in \Cc_j \subseteq \Cc$, the rate $r_{i\Cc}$ is a function of $j$ and the cardinalities $|\Cc_1|, |\Cc_2|, ...,$ and $|\Cc_K|$. Define $\Bg(\Cc) = (|\Cc_1|, |\Cc_2|, ..., |\Cc_K|)$ as the profile of group $\Cc$ and let $\CG$ denote the union of $\Bg(\Cc)$ for all $\Cc \in \CH$. As for all $2^N -1$ groups, $|\Cc_j|$ ranges from $0$ to $n_j$, we have $|\CG| = \prod_{j=1}^K{(n_j+1)}-1$.
In this problem class, the rate of any link in cluster $j$ can be denoted by $r^j_{\Bg(\Cc)}$. Therefore we refer to it as Multi-Cluster Cardinality-based Rates (MCCR).

For MTSP with MCCR, we subsequently use $(N, \Bd, \Br, K)$ to refer to this problem class. The key notations used are summarized in Table \ref{tab:notation} for easy reading. 

Note that the problem is modelled in a rather generic form and the link rates are not restricted to be produced by any explicit or implicit rate functions. Thus the new insights presented in this paper are independent of physical-layer system specifications and are valid for any feasible, or achievable, rates from a communication/information-theoretic perspective.

\begin{table}
\caption{Notation}
\vspace{-2mm}
\label{tab:notation}
\centering
\begin{tabular}{ c | l }
\hline
\hline
 Notation  & ~~~~~~~~~ Description \\
\hline
\hline
$N$					& Number of links \\
$\CN$				& The set of $N$ links \\
$\CH$				& The union of subsets of $\CN$ \\
$d_i$				& Data bits in the queue of transmitter of link $i$ \\
$\Bd$				& Demand vector $(d_1,d_2, ... d_N)$ \\  
$\Cc$				& Group, member of $\CH$ \\
$r_{i\Cc}$			& Rate of link $i$ if group $\Cc$ is activated \\
$t_\Cc$				& Activation time of group $\Cc$ \\
$\Bt$				& Scheduling vector, composed of $t_\Cc, \Cc \in \CH$ \\
$\Bt^*$				& Optimal scheduling solution \\
$T^*$				& Optimal scheduling length \\
$\CH^*$				& The set of activated groups at optimal, that is, \\ 
					& $\CH^* = \{\Cc \in \CH: T^*_{\Cc} > 0\}$ \\
$K$					& Number of clusters \\
$n_j$				& Number of links in cluster $j$ \\
$\Cc_j$				& Link set in the intersection of group $\Cc$ and cluster $j$, $\Cc_j \subseteq \Cc$ \\
$\Bg(\Cc)$ 			& Profile of group $\Cc$, defined as $\Bg(\Cc) = (|\Cc_1|, |\Cc_2|, ..., |\Cc_K|)$\\ 
$\CG$ 				& The union of $\Bg(\Cc)$ for all $\Cc \in \CH$ \\
$r^j_{\Bg(\Cc)}$	& Rate of any activated link in cluster $j$ and group $\Cc$ \\ 
					& with profile $\Bg(\Cc)$
\\
\hline
\hline
\end{tabular}
\end{table}


\section{Complexity Consideration}
\label{sec:complex}

We show in the following that MTSP with MCCR can be solved in polynomial time as long as $K$ is independent of $N$.

\begin{theorem}
\label{theo:polynomialM}
$(N, \Bd, \Br, K)$ is in class P, that is, the global optimum can be computed in polynomial time.
\end{theorem}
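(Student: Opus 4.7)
The key observation is that although the LP (\ref{eq:lp}) has $2^{N}-1$ variables, the MCCR assumption makes the rate of a link depend only on its cluster index and on the profile $\Bg(\Cc)$, not on which specific links make up $\Cc$. Since the number of distinct profiles is $|\CG|=\prod_{j=1}^{K}(n_j+1)-1 \leq (N+1)^{K}$, which is polynomial in $N$ whenever $K$ is held fixed, my plan is to compress (\ref{eq:lp}) into an equivalent LP whose size is polynomial in $N$ and then invoke polynomial-time LP solvability.

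Concretely, for each profile $\Bg=(g_1,\dots,g_K)\in\CG$ I would introduce the aggregate duration $T_{\Bg}=\sum_{\Cc:\Bg(\Cc)=\Bg} t_{\Cc}$, and for each link $i$ and profile $\Bg$ the airtime $x_{i,\Bg}=\sum_{\Cc:\Bg(\Cc)=\Bg,\, i\in\Cc} t_{\Cc}$. The objective becomes $\sum_{\Bg} T_{\Bg}$; the demand equations become $\sum_{\Bg} r^{j(i)}_{\Bg}\, x_{i,\Bg}=d_i$ for each link $i$, where $j(i)$ denotes the cluster of $i$; and two structural constraints tie the two sets of variables together, namely $\sum_{i \in \text{cluster } j} x_{i,\Bg}= g_j\, T_{\Bg}$ for every profile $\Bg$ and cluster $j$, together with $0 \leq x_{i,\Bg} \leq T_{\Bg}$. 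This compressed LP has $O(N\cdot|\CG|)$ variables and constraints, so it is of polynomial size and solvable in polynomial time by standard interior-point or ellipsoid algorithms.

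The main obstacle, and the step I expect to require genuine care, is showing that this reformulation is lossless. The easy direction is that any feasible $\Bt$ for (\ref{eq:lp}) induces $(T_{\Bg},x_{i,\Bg})$ satisfying all the above constraints with identical objective. The nontrivial direction is to reconstruct, from any feasible $(T_{\Bg},x_{i,\Bg})$, an actual $\Bt$ attaining the same demands and total length. I would handle each profile $\Bg$ separately and, within it, each cluster $j$ separately: the vector $(x_{i,\Bg}/T_{\Bg})_{i \in \text{cluster } j}$ lies in $[0,1]^{n_j}$ and has coordinate sum $g_j$, hence belongs to the base polytope of the uniform matroid of rank $g_j$ on $n_j$ elements. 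By a Carath\'eodory / Birkhoff-von Neumann style decomposition it can be written as a convex combination of polynomially many binary indicator vectors of size-$g_j$ subsets. Taking the Cartesian product of the per-cluster decompositions, and weighting each resulting group by $T_\Bg$ times the product of the corresponding per-cluster coefficients, yields groups with profile $\Bg$ and nonnegative durations summing to $T_{\Bg}$ that exactly reproduce the airtimes $x_{i,\Bg}$ (direct verification factors the sum across clusters and uses that each per-cluster weight sequence sums to one). Concatenating these realizations across all profiles produces a feasible $\Bt$ of length $\sum_{\Bg} T_{\Bg}$, matching the optimum of the compressed LP and therefore equal to $T^{*}$. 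Combining the equivalence with polynomial-time LP solvability establishes that $(N,\Bd,\Br,K)$ lies in class P for any fixed $K$.
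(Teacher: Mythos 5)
Your proof is correct, but it takes a genuinely different route from the paper. The paper works entirely in the LP dual \eqref{eq:dual}: by the symmetry of MCCR, the dual variables of links in the same cluster are interchangeable, so there is an optimal dual solution whose $\pi_i$ are ordered consistently with the (descending) demands; consequently, among all dual constraints sharing a profile $\Bg(\Cc)$, the one built from the first $|\Cc_j|$ links of each cluster is the most stringent, and the dual collapses to an LP with $N$ variables and $N-1+\prod_{j=1}^K(n_j+1)$ constraints, i.e., size $O(N^K)$. You instead compress the primal: aggregate durations by profile, keep per-link airtimes $x_{i,\Bg}$, and characterize realizability by $\sum_{i\in\text{cluster }j}x_{i,\Bg}=g_jT_\Bg$ and $0\le x_{i,\Bg}\le T_\Bg$; losslessness follows because, for each profile and cluster, $(x_{i,\Bg}/T_\Bg)_i$ lies in the hypersimplex (uniform-matroid base polytope), whose vertices are exactly the indicators of size-$g_j$ subsets, so a Carath\'eodory-type decomposition plus the product construction reconstructs a schedule with the exact airtimes — your verification of this product step is right, and the decomposition is indeed computable in polynomial time by a standard greedy argument (with the trivial convention for $T_\Bg=0$, where all $x_{i,\Bg}=0$). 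What your approach buys is an explicit optimal \emph{schedule} with polynomially many groups, directly from the compressed primal; the paper's dual argument is shorter and yields a smaller LP, but recovering the actual activation groups and durations from the reduced dual requires an extra (implicit) step such as solving the restricted primal over the top-demand groups identified per profile. Both arguments rely on $K$ being fixed, and both are sound.
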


\begin{proof}
Consider the LP dual problem in (\ref{eq:dual}). For problem class $(N, \Bd, \Br, K)$, the dual has the following form.
\begin{subequations}
\label{eq:dualc}
\begin{align}
\max~~ & \sum_{i \in \CN} d_i \pi_i \label{eq:dualcobj}\\
\text{s.~t.}~~ & \sum_{j=1}^K (r^j_{\Bg(\Cc)} \sum_{i \in \Cc_j \subseteq \Cc} \pi_i) \leq 1~~\Cc\in \CH \label{eq:dualccons} \\
& \Bpi \geq 0
\end{align}
\end{subequations}

Observe that in \eqref{eq:dualccons}, the dual variables of links in the same cluster have uniform coefficient, i.e., the same rate value, and occur in the same pattern in the constraint set. As a result, for any feasible solution, if we swap the values of $\pi_i$ and $\pi_l$, for any $i, l \in \Cc_j$, the new solution remains feasible. It follows that there exists an optimal solution such that, for each cluster, the order of values of dual variables is consistent with the order of values of link demands because otherwise the objective function value can be improved by swapping the variable values so that the condition holds. The demand vector $\Bd$ is given in descending order, with the result that in this optimal solution, $\pi_1 \geq \pi_2 \geq \dots \geq \pi_N  \geq 0$. Based on the above observation, one can conclude that, for all constraints in \eqref{eq:dualccons} that correspond to the groups with same profile $\Bg(\Cc) = (|\Cc_1|, |\Cc_2|, ..., |\Cc_K|)$, if the one that consists of the first $|\Cc_j|$ links in cluster $j$ is satisfied, the others are also met. Therefore, we can use this constraint, which is the most stringent one, as a substitute for the whole set.

For example, assume $K = 3$, in each cluster, arrange the link demands in descending order. For simplification, define the $i$th link in cluster $j$ as link $ji$. For all the groups with $\Bg(\Cc) = (|\Cc_1|, |\Cc_2|, |\Cc_3|) = (1, 2, 1)$, we can identify the most stringent constraint is the one constituted with the first link in cluster 1, the first two links in cluster 2 and the first link in cluster 3. That is, 

\begin{equation}
\label{eq:strict}
r^1_{\Bg(\Cc)}\pi_{11} + r^2_{\Bg(\Cc)}(\pi_{21} + \pi_{22}) + r^3_{\Bg(\Cc)}\pi_{31} \leq 1
\end{equation}


Following the aforementioned rationale, the inequality (\ref{eq:strict}) is equivalent to cover the constraints for all the
${n_1 \choose 1} {n_2 \choose 2} {n_3 \choose 1}$ groups, herein $n_j$ is the total number of links in cluster $j$.

For all $2^N-1$ possible groups, there are $|\CG|$, i.e., $\prod_{j=1}^K{(n_j+1)}-1$ different profiles $\Bg(\Cc)$. Following the above approach, in \eqref{eq:dualccons}, we have at most the same number of constrains that are sufficient to define the optimum. Therefore, together with the constraints derived from the order of link demands, i.e., $\pi_1 \geq \pi_2 \geq \dots \geq \pi_N \geq 0$, the total number of constraints reduces to $N-1+\prod_{j=1}^K{(n_j+1)}$, implying that the optimal solution to problem class $(N, \Bd, \Br, K)$ is found by solving an LP of size $O(N^K)$. As long as $K$ is independent of $N$, it can be solved in polynomial time, hence the conclusion follows.
\end{proof}

\begin{remark}
Theorem \ref{theo:polynomialM} significantly extends previous results on the complexity of cardinality-based rates model (see \cite{j12} and the references therein). In fact, the cardinality-based rates model is a special case of $(N, \Bd, \Br, K)$, in which $K=1$, that is, the case of only one cluster so that the rate values are determined solely by the group cardinality $|\Cc|$.
\end{remark}

\section{Optimality Conditions for Problem Decomposition}
\label{sec:oc}
According to the proof of Theorem \ref{theo:polynomialM}, the complexity of $(N, \Bd, \Br, K)$ is determined by $N^K$, which means if $K$ increases, the complexity grows rapidly. However, if the scheduling problem allows a decomposition into $K$ subproblems so the optimal solution for each cluster can be constructed separately, the complexity is significantly decreased. Thus we are interested in investigating when such a decomposition can occur. We refer to the groups formed by link(s) in a single cluster as {\bf\emph{intra-cluster groups}}, (as opposed to {\bf\emph{inter-cluster groups}}), which contain links in more than one cluster). Decomposition means that the optimal solution only needs to consider intra-cluster groups.

We first consider a more structured setting, in which the links in the same cluster have uniform demand. The following theorem provides a sufficient condition. 

\begin{theorem}
\label{theo:introclu}
$(N, \Bd, \Br, K)$ with uniform demand in each cluster decomposes if there is at least one intra-cluster group in each cluster having higher or equal sum-rate, i.e., the sum of all the link rates in the group, than any inter-cluster group.
\end{theorem}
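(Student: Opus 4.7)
The plan is to argue by LP duality. Let $(P_{\text{intra}})$ denote the restriction of (\ref{eq:lp}) in which $t_\Cc$ is forced to zero for every inter-cluster group, and $(D_{\text{intra}})$ its dual, obtained from (\ref{eq:dualc}) by keeping only the constraints that correspond to intra-cluster groups. Because $(P_{\text{intra}})$ has a smaller feasible region, its optimum is at least $T^*$, and decomposition is equivalent to showing that equality holds. By strong duality, it suffices to exhibit a vector $\pi^*$ that is optimal for $(D_{\text{intra}})$ and simultaneously satisfies every remaining constraint of (\ref{eq:dualc}), as this forces the full dual optimum to match that of $(D_{\text{intra}})$.

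First I would solve $(D_{\text{intra}})$ explicitly. Letting $d^{(j)}$ denote the common demand in cluster $j$, the swap argument already used in the proof of Theorem~\ref{theo:polynomialM}---interchanging $\pi_i$ and $\pi_\ell$ for $i,\ell$ in the same cluster leaves every intra-cluster constraint invariant and, under uniform demand, keeps the objective unchanged---yields an optimum in which $\pi_i$ depends only on its cluster, say $\pi_i=\pi^{(j)}$ for every $i$ in cluster $j$. The constraints decouple across clusters, reducing in cluster $j$ to $k\, r^j_k\, \pi^{(j)} \le 1$ for $k=1,\dots,n_j$, where $r^j_k$ denotes the common rate of an intra-cluster group of size $k$ in cluster $j$. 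Setting $S_j := \max_{1\le k\le n_j} k\, r^j_k$, the tightest constraint gives $\pi^{(j)}=1/S_j$, so the $(D_{\text{intra}})$ optimum is $\sum_{j=1}^K n_j\, d^{(j)}/S_j$.

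Next I would verify that this same $\pi^*$ satisfies the inter-cluster constraints of (\ref{eq:dualc}). For any inter-cluster $\Cc$ the constraint reduces to
\begin{equation*}
\sum_{j=1}^K |\Cc_j|\, r^j_{\Bg(\Cc)}/S_j \;\le\; 1.
\end{equation*}
The hypothesis says precisely that, for every cluster $j'$, the intra-cluster sum-rate $S_{j'}$ dominates the sum-rate $\sum_{j=1}^K |\Cc_j|\, r^j_{\Bg(\Cc)}$ of $\Cc$; taking the minimum over $j'$ gives $\sum_j |\Cc_j|\, r^j_{\Bg(\Cc)} \le \min_{j'} S_{j'}$. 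Bounding each denominator from below by $\min_{j'} S_{j'}$ collapses the left-hand side to $\bigl(\sum_j |\Cc_j|\, r^j_{\Bg(\Cc)}\bigr)/\min_{j'} S_{j'} \le 1$, as required. Combining weak duality with this feasibility closes the circle: the full dual optimum is at least the $(D_{\text{intra}})$ optimum, hence $T^*$ is at least the $(P_{\text{intra}})$ optimum, and the reverse inequality is trivial by restriction, so an optimum supported only on intra-cluster groups exists. The only delicate step I anticipate is the symmetry reduction to cluster-uniform $\pi^{(j)}$, but the uniform-demand assumption lets the swap argument from Theorem~\ref{theo:polynomialM} apply verbatim, so the remainder is direct computation.
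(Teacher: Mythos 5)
Your proof is correct, but it follows a genuinely different route from the paper's. The paper works entirely on the primal side: starting from an optimal schedule it performs a two-step transformation, first symmetrizing each activated group by rotation and equal time sharing within each cluster (exploiting the uniform demand and rate symmetry), and then replacing the groups derived from each inter-cluster group by rotated intra-cluster groups of the cardinality whose sum-rate dominates, showing the resulting intra-cluster-only schedule is no longer than the original. You instead produce a dual certificate: you solve the intra-cluster-restricted dual in closed form ($\pi_i = 1/S_j$ for $i$ in cluster $j$, with $S_j=\max_k k\,r^j_k$), verify that the hypothesis makes this point feasible for every inter-cluster constraint of (\ref{eq:dualc}) via $\sum_j |\Cc_j|\,r^j_{\Bg(\Cc)}/S_j \le \bigl(\sum_j |\Cc_j|\,r^j_{\Bg(\Cc)}\bigr)/\min_{j'}S_{j'} \le 1$, and close with duality. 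Your route is shorter and buys more: it gives the closed-form optimum $T^*=\sum_j n_j d^{(j)}/S_j$ and an explicit optimality certificate, whereas the paper's argument is constructive, showing how to convert any optimal schedule into an intra-cluster one. One caveat on the step you yourself flag: cluster-uniformity of an optimal solution of $(D_{\text{intra}})$ does not follow from the swap argument ``verbatim''---under uniform demands a swap leaves the objective unchanged, so it only shows any within-cluster ordering can be assumed, not that the values are equal. Repair it either by averaging an optimum over all within-cluster permutations (the feasible set is convex and permutation-symmetric, the objective invariant), or more simply bypass solving $(D_{\text{intra}})$ exactly: the rotational intra-cluster schedule that serves each cluster $j$ with its best group size is feasible and has length $\sum_j n_j d^{(j)}/S_j$, which equals the objective of your dual-feasible point, so weak duality alone sandwiches $T^*$. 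With either one-line patch your argument is complete.
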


\begin{proof}
Suppose $\CH^*$ is an optimum schedule.
We utilize the fact that the links of each cluster exhibit
full symmetry in demand and rate, and perform a two-step
transformation of $\CH^*$ to arrive at a new solution that uses
intra-cluster groups only and does not run longer than $\CH^*$ in
schedule length.

In the first step, we construct a new solution ${\bar
\CH}^*$ as follows. Consider any group $\Cc^* \in \CH^*$ that runs
for time $t_{\Cc^*}$.
Without loss of generality, for notational convenience we assume
$\Cc^*$ is composed by links of the first $k$ clusters, with $1 \leq k
\leq K$, i.e., $\Cc^* = \cup_{j=1}^k \Cc^*_j$.
Consider any cluster $j \in \{1, \dots, k\}$, and suppose,
again merely for simplifying notation, the link indices of this
cluster are $1, \dots, n_j$, among which $1,\dots, |\Cc^*_j|$ form
$\Cc_j^*$ in group $\Cc^*$. We construct $n_j$ groups, all having
size $|\Cc_j^*|$ and containing links of cluster $j$ in a rotational manner,
that is, $\{1, 2, \dots, |\Cc_j^*|\}$, $\{2,
\dots, |\Cc_j^*|, |\Cc_j^*|+1\}$, and so on, with the last group being $\{n_j, 1, 2,
\dots, |\Cc_j^*|-1\}$.
Applying this operation to all clusters $j = 1, \dots, k$ and taking
the Cartesian product of the resulting sets of groups lead to a set of
$\prod_{j=1}^k n_j$ link groups. Each group runs for a time duration $\frac{t_{\Cc^*}}{\prod_{j=1}^k n_j}$. 

Consider applying the above construction to all $\Cc^* \in \CH^*$, and
denote the result by ${\bar \CH}^*$. We make the following
observations. First, for any cluster, all of its links receive equal
activation time in ${\bar \CH}^*$.  This, together with the assumption
of uniform demand across all links in the same cluster, imply that for any link and group in ${\bar
\CH}^*$, the link will not have its demand emptied when the group is
active. Second, for each cluster having link activation in $\Cc^* \in
\CH^*$, the rate of any link of this cluster in the groups derived 
remain as that in $\Cc^*$, because
the cardinality of links of the cluster is not altered.
Third, again as a result of equal time sharing,
for any cluster $j$, the total amount of demand drained 
by ${\bar \CH}^*$ for all links in $j$ equals that in $\CH^*$.
In conclusion, ${\bar \CH}^*$ qualifies as a schedule.
By construction, ${\bar \CH}^*$ has the same length in
time duration as $\CH^*$, and hence ${\bar \CH}^*$
is an alternative optimum.

The second step of the transformation eliminates the use of inter-cluster
groups. Consider any group $\Cc^*$ in the original schedule $\CH^*$
and suppose again $\Cc^* = \cup_{j=1}^k \Cc^*_j$,
with $k>1$, i.e.,
$\Cc^*$ is an inter-cluster group. 
Denote by ${\bar \CH}^*(\Cc^*) =  \{ {\bar \Cc}^{*,1},
\dots, {\bar \Cc}^{*,m} \} $ the set of the groups derived from $\Cc^*$ in step one,
with $m = \prod_{j=1}^k n_j$. Consider any cluster $j \in \{1, \dots
k\}$. It follows from the construction above (cf. Fig. \ref{fig:proc}) that all
links of $j$ appears in ${\bar \CH}^*(\Cc^*)$, with full symmetry in
terms of both rate and amount of activation duration.  Thus the amount
of demand drained by using ${\bar \CH}^*(\Cc^*)$ is uniform for all
links in $j$. Moreover, activating any ${\bar \Cc}^{*,i} \in {\bar
\CH}^*(\Cc^*)$, the demand drained per time unit for all links (of
multiple clusters) in ${\bar \Cc}^{*,i}$ equals the sum rate of ${\bar \Cc}^{*,i}$.


By the theorem's assumption, there is some cardinality, say $\ell_j$,
for which the intra-cluster groups of cluster $j$ have the same or
better sum rate than that of any group in ${\bar \CH}^*(\Cc^*)$.  We
construct $n_j$ intra-cluster groups with cardinality $\ell \leq n_j$
for cluster $j$ in a rotational manner, that is, $\{1, \dots,
\ell_j\}$, $\dots$, $\{n_j, \dots, \ell_j-1\}$.
By the construction, using these $n_j$ intra-cluster groups with
uniform activation time drains the same amount of demand across
all links of $j$. Consider performing the
construction for all $j = 1, \dots, k$, and using the resulting 
intra-cluster groups to serve the same amount of demand as in ${\bar
\CH}^*(\Cc^*)$ for each $j$. Doing so achievers a higher or equal
amount of drained demand per time unit than ${\bar \CH}^*(\Cc^*)$
throughout, no matter which of the new groups is under activation.
Hence after the second step of the transformation we obtain a schedule 
consisting of intra-cluster groups only and having better or equal
performance as $\CH^*$, and the theorem follows.
\end{proof}

We illustrate the two-step transformation in Figure \ref {fig:proc} with an example.
Assume cluster 1 has link set $\{1, ~2, ~3, ~4\}$ and cluster 2 has $\{5, ~6,~7\}$. The intra-cluster groups in cluster 1 and 2, with two and three links, respectively, have higher sum-rates. In this case, we show below the transformation on an arbitrarily selected group $\Cc^* = \{1, ~2, ~3, ~5, ~6\}$.

\begin{figure} [ht!]	
\centering
{\includegraphics[width=119mm]{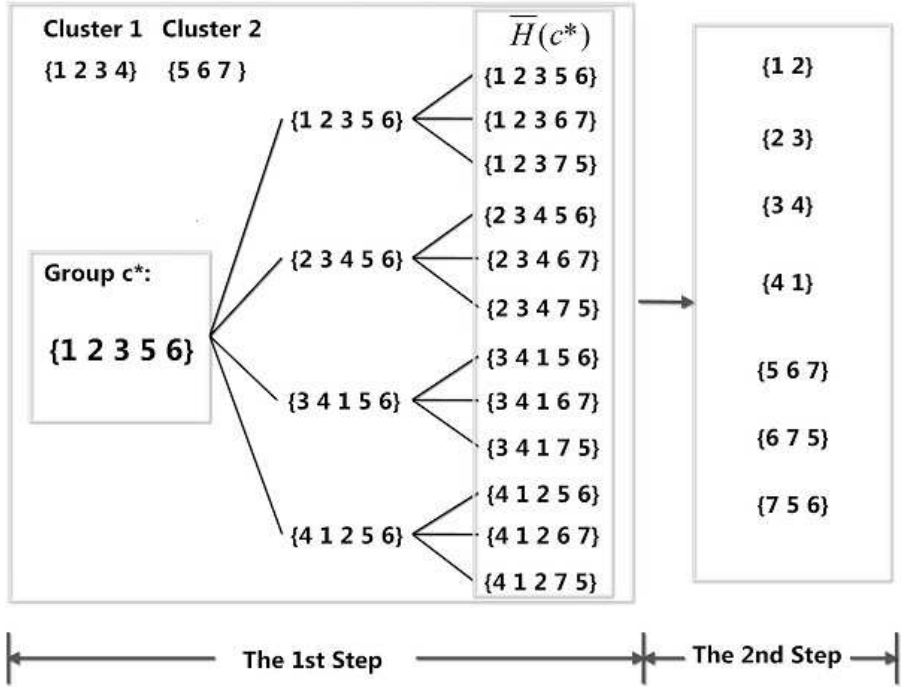}}
\vspace{-2mm}
\caption{Two-step transformation}
\label{fig:proc}
\end{figure}

For the links with non-uniform demand, the above theorem cannot be applied directly. Thus we set up the following sufficient condition for the general case.

\begin{theorem}
\label{theo:introclun}
If the single-link group in each cluster has higher or equal sum-rate than any inter-cluster group, then the problem $(N, \Bd, \Br, K)$ with non-uniform demand decomposes.
\end{theorem}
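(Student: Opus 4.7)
The plan is to show directly that any optimum can be transformed, by replacing each of its inter-cluster groups with an equivalent collection of \emph{singleton} (one-link) activations, into a schedule that is no longer than the optimum and uses only intra-cluster groups. Because the demands are non-uniform, the rotational-symmetry construction used in the proof of Theorem~\ref{theo:introclu} is no longer available; the singleton substitution sidesteps this difficulty because a single activated link is trivially an intra-cluster group regardless of demand.

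Fix any optimal schedule $\CH^*$ and any inter-cluster group $\Cc^* \in \CH^*$ with activation time $t_{\Cc^*}$. Let $\Be_j$ denote the profile with a single $1$ in coordinate $j$, so $r^j_{\Be_j}$ is the rate of a lone link of cluster $j$. For each $i \in \Cc^*_j$, I would activate link $i$ alone for time
\[
\tau_i \;=\; \frac{r^j_{\Bg(\Cc^*)}\, t_{\Cc^*}}{r^j_{\Be_j}},
\]
which drains from link $i$ exactly the $r^j_{\Bg(\Cc^*)}\, t_{\Cc^*}$ bits it would have received under $\Cc^*$. Applying this swap to every inter-cluster group in $\CH^*$ therefore preserves the total demand drained at each link, so the resulting schedule is feasible and uses only intra-cluster groups.

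The crux is verifying that the swap does not extend the schedule. Let $S = \sum_{j} |\Cc^*_j|\, r^j_{\Bg(\Cc^*)}$ be the sum-rate of $\Cc^*$. Since $\Cc^*$ is inter-cluster, the hypothesis gives $r^j_{\Be_j} \geq S$ for every $j$ with $\Cc^*_j \neq \emptyset$. Hence
\[
\sum_{j}\sum_{i\in\Cc^*_j} \tau_i
\;=\; t_{\Cc^*}\sum_{j} \frac{|\Cc^*_j|\, r^j_{\Bg(\Cc^*)}}{r^j_{\Be_j}}
\;\le\; t_{\Cc^*}\sum_{j} \frac{|\Cc^*_j|\, r^j_{\Bg(\Cc^*)}}{S}
\;=\; t_{\Cc^*}.
\]
Summing over all inter-cluster groups in $\CH^*$, the transformed schedule has length at most $T^*$ and consists solely of intra-cluster groups; optimality of $\CH^*$ then forces equality, establishing the decomposition.

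The main obstacle I anticipate is precisely the inequality above. The naive per-cluster estimate $|\Cc^*_j|\, r^j_{\Bg(\Cc^*)} \le r^j_{\Be_j}$ (which follows from rate monotonicity together with the hypothesis) only bounds the total replacement time by $K\, t_{\Cc^*}$, far too weak. The decisive observation is that the hypothesis actually bounds each $r^j_{\Be_j}$ from below by the \emph{whole} sum-rate $S$ of $\Cc^*$, rather than by just the cluster-$j$ contribution to $S$; this uniform lower bound is exactly what makes the summation telescope to~$1$ and closes the argument.
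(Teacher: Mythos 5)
Your proposal is correct and is essentially the paper's own argument: the paper likewise replaces each inter-cluster group by single-link activations of duration $r^j_{\Bg(\Cc^*)}|\Cc^*_j|\,t_{\Cc^*}/R_j$ (your $\tau_i$ summed over $i\in\Cc^*_j$, with $R_j=r^j_{\Be_j}$) and closes with the same key inequality, bounding each single-link rate below by the full sum-rate of $\Cc^*$ so that the replacement times sum to at most $t_{\Cc^*}$. No gap; the approaches coincide.
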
  

\begin{proof}
Suppose an inter-cluster group $\Cc^* = \cup_{j=1}^k \Cc^*_j, 1 < k \leq K$, exists in an optimal solution and it runs for time $t_{\Cc^*}$. The solution can be further improved by replacing this group with a combination of single-link groups, that is, all the links in $\Cc^*$ will be activated one by one. To empty same demand as $\Cc^*$ does, the activation duration for the single-link groups of cluster $j$ is:
\begin{equation}
\frac {r^j_{\Bg(\Cc^*)}|c_j|t_{\Cc^*}} {R_j}, 
\end{equation}
herein $R_j$ is the rate of the link in cluster $j$ when it is activated alone.

According to the theorem's condition, the total time the new groups needed is calculated as following:
\begin{equation}
\sum_{j=1}^k \frac {r^j_{\Bg(\Cc^*)}|c_j|t_{\Cc^*}} {R_j} \leq \sum_{j=1}^k \frac {r^j_{\Bg(\Cc^*)}|c_j|t_{\Cc^*}} {\sum_{j=1}^k r^j_{\Bg(\Cc^*)}|c_j|} = t_{\Cc^*}
\end{equation}

The new solution is either better or as good as the original one, hence the conclusion follows.
\end{proof}

On recognizing the decomposition for $(N, \Bd, \Br, K)$, we show in the following theorem that it is polynomial-time tractable.

\begin{theorem}
\label{theo:ccc}
The computational complexity of recognizing the conditions in Theorem \ref{theo:introclu} and \ref {theo:introclun} is no more than $O((\frac{N+K}{K})^K)$.
\end{theorem}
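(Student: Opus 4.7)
The plan is to observe that both conditions in Theorems \ref{theo:introclu} and \ref{theo:introclun} depend only on sum-rates of groups, and since the rate of any link in a group is a function solely of the profile $\Bg(\Cc)$ (and the cluster index), the sum-rate of a group is determined entirely by its profile. Thus, rather than iterating over all $2^N-1$ groups, it suffices to iterate over the at most $|\CG| = \prod_{j=1}^K(n_j+1)-1$ distinct profiles. For each profile, the sum-rate is computed by a summation over $K$ terms.

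First I would make this explicit: for Theorem \ref{theo:introclun}, compute the $K$ single-link rates $R_1, \dots, R_K$; then sweep through every profile with at least two nonzero coordinates, evaluate its sum-rate, and check that $\max_j R_j$ (or, more precisely, each $R_j$ weighted into the per-cluster comparison as needed in the proof of Theorem \ref{theo:introclun}) meets the required inequality. For Theorem \ref{theo:introclu}, enumerate the intra-cluster sum-rates of each cluster $j$ (at most $n_j$ of them, one per cardinality), find their maximum $S_j^{\max}$, and then sweep through the inter-cluster profiles, verifying the required dominance against each $S_j^{\max}$. In both cases, the dominant cost is a single pass over the profiles in $\CG$, with $O(K)$ work per profile, giving complexity $O(K \cdot \prod_{j=1}^K(n_j+1))$.

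The final step is to bound $\prod_{j=1}^K(n_j+1)$ in terms of $N$ and $K$. This is a direct application of the AM-GM inequality: since the positive numbers $n_1+1, \dots, n_K+1$ sum to $N+K$, their geometric mean is at most their arithmetic mean, so
\begin{equation}
\prod_{j=1}^K (n_j+1) \;\leq\; \left(\frac{\sum_{j=1}^K (n_j+1)}{K}\right)^{K} \;=\; \left(\frac{N+K}{K}\right)^K.
\end{equation}
Absorbing the $O(K)$ per-profile work into the bound (since $K \leq N+K$) yields the claimed $O\!\left(\left(\frac{N+K}{K}\right)^K\right)$ complexity.

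I do not expect a serious obstacle here; the content is essentially a careful counting argument paired with AM-GM. The only subtlety to get right is the reduction from enumerating groups to enumerating profiles, which relies on the MCCR property established earlier in the paper. Once that reduction is stated cleanly, the bound follows immediately.
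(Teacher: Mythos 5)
Your proposal follows essentially the same route as the paper's proof: reduce the check from the $2^N-1$ groups to the at most $\prod_{j=1}^K(n_j+1)-1$ distinct profiles (since MCCR makes the sum-rate a function of the profile alone), then bound this product by $\left(\frac{N+K}{K}\right)^K$ via the AM--GM inequality. Your added detail on how each condition is checked per profile and the $O(K)$ per-profile work is a slightly more careful accounting than the paper gives, but the argument is the same.
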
  

\begin{proof}
For $(N, \Bd, \Br, K)$, we have $2^N-1$ groups $\Cc \in \CH$. Among them, the groups with the same profile $\Bg(\Cc)$ have uniform sum-rate due to the properties of MCCR. Therefore, even in the worst case in which we need to check all the possible sum-rates to recognize the decomposition condition, there are no more than $|\CG| = \prod_{j=1}^K{(n_j+1)}-1$ sum-rates instead of $2^N-1$. According to the inequality of arithmetic and geometric means:

\begin{equation}
\prod_{j=1}^K{(n_j+1)} \leq \left(\begin{array}{c}\frac{\sum_{j=1}^K{(n_j+1)}}{K}\end{array}\right)^K = \left(\begin{array}{c}\frac{N+K}{K}\end{array}\right)^K 
\end{equation}
The computational complexity is at most $O((\frac{N+K}{K})^K)$. Hence the conclusion follows.
\end{proof}

\begin{remark}

Even with the maximum time complexity $O((\frac{N+K}{K})^K)$, the recognition of decomposition condition is much easier than directly solving the LP.
This is because the running time for the latter is of $O(N^{3.5K})$ (or even longer, depending on method).
Moreover, for some special cases, the recognition can be trivial. We show an example of such cases in Figure \ref{fig:exmp2}.
In this scenario, the receivers in one cluster are far from their own transmitters but close to the transmitters in the other cluster, thus the links in different clusters generate significant interference to each other, which means the sum-rate of an inter-cluster group is much less than that of an intra-cluster group. Thus we can disregard most of inter-cluster groups in checking the condition.
\end{remark}


\begin{figure} [ht!]	
\centering
{\includegraphics[scale=0.5]{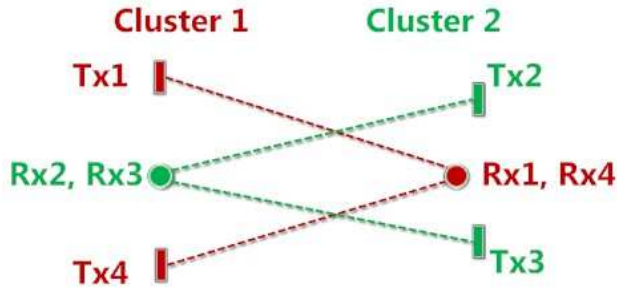}}
\vspace{-2mm}
\caption{An example with high interference among inter-cluster links.}
\label{fig:exmp2}
\end{figure}

\section{Application of Multi-Cluster Cardinality-Based Rates Model}
\label{sec:algorithm}

The above results hold provided that the links can be partitioned into $K$ clusters such that the symmetry in rates holds for all the links in the same cluster. In a wireless system with one base station and a number of users, this is not the case even with a distance-based propagation model because the users' locations do not confirm with the given criteria.
For this scenario, the complexity result in \cite{j12} is still valid, that is, with a polynomial reduction, the MTSP is proved to be as hard as a fractional coloring problem, which is known to be ${\cal NP}$-hard. Therefore the scheduling problem for this scenario remains hard. 
However, we can divide the users into several clusters in which they have similar distances to the base station. Then the MCCR can be applied to 
the model of the general case. The greater the total number of clusters $K$, the more accurate the modelling is. Therefore the applicability of MCCR extends to a more general setup as an approximation and its polynomial-time tractability contributes to the efficiency of the solution.

\subsection{k-means clustering}
To apply the MCCR model to general networks, the links need to be divided into several clusters with close-to-uniform gain within each cluster. For geometric distance based channel gain, an intuitive method is to partition the users into several clusters based on their locations so that each user belongs to the cluster with the nearest mean distance to the base station. This is actually the well-studied k-means clustering. That is, define $l_i$ as the length of link $i$, i.e., the distance between user $i$ and the base station, we aim to partition the $N$ data points in $\{l_1, l_2, ..., l_N\}$ into $K$ disjoint subsets $\Cs_j, j = 1, 2, ..., K$ so as to minimize the within-cluster sum of squares (WCSS):
\begin{subequations}
\label{eq:kmeans}
\begin{align}
\min~~ & \sum_{j=1}^K {\sum_{i \in \Cs_j}|l_i - \mu_j|^2},\\
\text{s.~t.}~~& \bigcup_{j=1}^K \Cs_j = \{l_1, l_2, ..., l_N\},\\
& \Cs_j \cap \Cs_m = \emptyset, \forall j \neq m,\\
& \mu_j = \frac {\sum_{i \in \Cs_j}l_i}{|\Cs_j|}.
\end{align}
\end{subequations}


Here we use the notation $K$ as a parameter in the clustering. Increasing $K$ on one side improves the precision of the modelling, but on the other side, increases the time complexity of solving the model. Hence a trade-off is needed. In Section \ref{sec:simulation}, numerical results are provided to show the impact of $K$ on performance.

We apply the standard k-means algorithm, which is also referred to as Lloyd's algorithm \cite{Md03}, to the problem. 
The algorithm consists of a re-estimation procedure as follows. Initially, the data points are assigned at random to the $K$ sets. In step 1, each data point $l_i$ is assigned to the cluster whose mean yields the least WCSS. In step 2, the new mean of the data points in each cluster is calculated to be the new centroid. These two steps are alternated until a stopping criterion is met, i.e., when there is no further change in the assignment of the data points. 

After clustering, we set the mean $\mu_j^*$ as the new length of the links in cluster $j$ for cardinality-based rates calculation. It is utilized to provide an approximation of actual link rate.

\subsection{Column Generation Algorithm}
The problem $(N, \Bd, \Br, K)$ is polynomial-time solvable. Still, when $N^K$ is large, directly solving the LP is time-consuming since its running time can be more than cubic of the LP size. Therefore we develop a column generation algorithm with the advantage of reduced computational complexity.

The algorithm works as follows: firstly we split the scheduling problem into two problems: the master problem and the subproblem. The master problem is the original LP (\ref{eq:lp}) with only a subset of variables, or groups (denoted by $\CH'$) being considered.

\begin{subequations}
\label{eq:cg}
\begin{align}
\min~~ & \sum_{\Cc \in \CH'} t_\Cc \label{eq:cgo}\\
\text{s.~t.}~~ & \sum_{\Cc \in \CH'} r_{i\Cc} t_\Cc = d_i ~~i=1, \dots, N \label{eq:cgcons} \\
& \Bt \geq 0
\end{align}
\end{subequations}

The subproblem is the following 
\begin{subequations}
\label{eq:rc}
\begin{align}
\min~~ &1- \sum_{i \in \CN} r_{i\Cc} \pi_i, \label{eq:rco}\\
\text{s.~t.}~~ & \sum_{i \in \Cc} r_{i\Cc} \pi_i \leq 1~~\Cc \in \CH, \label{eq:rcc} \\
& \Bpi \geq 0.
\end{align}
\end{subequations}

We start with a simple and feasible TDMA-based activation, that is, all the $N$ links are emptied one by one. Solving the master problem, 
we are able to obtain dual prices $\pi$ for each of the constraints. This information is then utilized in the objective function of the subproblem. On solving the subproblem, 
we design the following quick method based on the property of MCCR: 1) rank the $\pi_i$ with descending order in each cluster; 2) form the group that has the minimum reduced cost, i.e., the objective of subproblem, among all the candidate groups that have same $\Bg(\Cc)$. Thus we 
reduce the size of solution space of the subproblem from $2^N-1$ to $\prod_{j=1}^K{(n_j+1)}-1$. Then we solve the subproblem and if the objective value is negative, a variable, i.e., a group, with negative reduced cost has been identified. This variable is then added to the master problem, and the master problem is re-solved. Re-solving the master problem will generate a new set of dual values, and the process is repeated until no negative reduced cost variables are identified. The subproblem returns a solution with non-negative reduced cost. We can thus conclude that the solution to the master problem is optimal.

\subsection{Lower and Upper Bounds for Optimum}
Besides approximating the optimal solution by calculating link rates with $\mu^*$, the proposed approach can be utilized to obtain lower and upper bounds for MTSP. In detail, instead of the rates derived from $\mu^*$, we consider the minimum and maximum rates in each cluster. 

For the example scenario, the approximation is equivalent to put all the links in cluster $j$ on a circle around the central base station with a radius equals to $\mu_j^*$. Now we extend the circle to a minimum ring in which all the users in this cluster are included. Then if we use the minimum link length to calculate the signal strength and the maximum value to calculate the interference, we obtain an upper bound for the SINR. Conversely, we have a lower bound. As in wireless networks, the rate that a link can support is monotonously increasing with its SINR, we actually get an interval of the link rate. 

Suppose we use the lower bound of the link rate as the input of MTSP and get the objective value, i.e., schedule length, ${\hat T}$; then use the upper rates to get the schedule length ${\check T}$. We prove in the following that the true optimal schedule length ${T^*}$ is bounded by these two values, that is, ${\check T} \leq {T^*} \leq {\hat T}$.


\begin{theorem}
\label{theo:bounds}
The optimal scheduling length of MTSP that can be approximated by the MCCR model is bounded by ${\hat T}$ and ${\check T}$.
\end{theorem}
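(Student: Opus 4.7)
The plan is to establish both inequalities via a single monotonicity fact: if every rate coefficient in \eqref{eq:lp} is replaced by a no-larger (respectively no-smaller) value, the optimum schedule length cannot decrease (respectively increase). Applied with the lower-rate data and with the upper-rate data, this yields $\check{T}\le T^{*}\le\hat{T}$ at once. The componentwise relation $r^{L}_{i\Cc}\le r_{i\Cc}\le r^{U}_{i\Cc}$, for every $\Cc\in\CH$ and every $i\in\Cc$, follows directly from the construction in Section~\ref{sec:algorithm}: the true signal and interference distances lie inside the min/max intervals used to build the two bounds, and the rate is monotone in SINR.

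For the upper bound, let $\hat{\Bt}$ be an optimal schedule for the LP \eqref{eq:lp} with rate data $r^{L}$, so that $\sum_{\Cc}r^{L}_{i\Cc}\hat{t}_{\Cc}=d_{i}$ for every $i$. Since $\hat{t}_{\Cc}\ge 0$ and $r_{i\Cc}\ge r^{L}_{i\Cc}$,
\begin{equation*}
\sum_{\Cc}r_{i\Cc}\hat{t}_{\Cc}\;\ge\;\sum_{\Cc}r^{L}_{i\Cc}\hat{t}_{\Cc}\;=\;d_{i},\quad i=1,\dots,N,
\end{equation*}
so $\hat{\Bt}$ is feasible for the $\ge$-relaxation of \eqref{eq:lp} evaluated at the true rates. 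This relaxation has the same optimum as \eqref{eq:lp}---a fact already reflected in the dual \eqref{eq:dual} being stated with $\Bpi\ge 0$, and equivalent to the (obvious) monotonicity of $T^{*}$ in $\Bd$. Hence $T^{*}\le\sum_{\Cc}\hat{t}_{\Cc}=\hat{T}$.

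The lower bound is symmetric: starting from the true optimum $\Bt^{*}$, the inequality $r^{U}_{i\Cc}\ge r_{i\Cc}$ gives $\sum_{\Cc}r^{U}_{i\Cc}t^{*}_{\Cc}\ge d_{i}$ for every $i$, so $\Bt^{*}$ is feasible for the $\ge$-relaxation of the upper-rate LP, and the same relaxation argument yields $\check{T}\le\sum_{\Cc}t^{*}_{\Cc}=T^{*}$. The only step that requires more than a one-line componentwise comparison is the equivalence between the equality form \eqref{eq:lp} and its $\ge$-relaxation; this is the main, and quite mild, obstacle. I would settle it by noting that the MTSP optimum is monotone non-decreasing in $\Bd$, so the dual optimum lies in $\{\Bpi\ge 0\}$, and hence \eqref{eq:dual} serves equally as the dual of the equality form and of its $\ge$-relaxation, proving they share the same optimum.
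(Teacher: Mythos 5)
Your proof is correct, but it runs through the primal where the paper runs through the dual. The paper fixes the dual \eqref{eq:dual}, observes that replacing the coefficients $r_{i\Cc}$ by the larger values $\hat r_{i\Cc}$ only tightens the constraints \eqref{eq:dualcons}, so the dual feasible region shrinks and the dual optimum can only drop; strong duality then gives ${\check T}\le T^{*}$, and the symmetric argument gives $T^{*}\le{\hat T}$. You instead reuse optimal \emph{schedules}: the optimum for the pessimistic rates over-serves every demand under the true rates, and the true optimum over-serves under the optimistic rates, so each is feasible for the corresponding $\ge$-relaxation. The two arguments are dual mirror images of the same componentwise monotonicity, and yours is arguably the more elementary one; its only extra burden is the step you correctly flag, namely that the equality-constrained LP \eqref{eq:lp} and its $\ge$-relaxation have the same optimum. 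Note that the paper incurs exactly the same debt implicitly, since \eqref{eq:dual} is stated with $\Bpi\ge 0$, i.e., it is really the dual of the $\ge$-relaxation, so invoking ``strong duality'' there already presumes a nonnegative optimal dual. Your justification of that step is sound: if the value function is non-decreasing in $\Bd$, then every subgradient (hence every optimal dual) is componentwise nonnegative; just be aware that the monotonicity in $\Bd$ is itself not purely formal for the equality-constrained LP --- it rests on the rate monotonicity property via a truncation argument of the kind the paper uses in Section~\ref{sec:vd} (drop a link from its group once its queue empties; the remaining rates do not decrease), and it must be invoked for each of the three rate settings (true, upper, lower), all of which satisfy rate monotonicity, so this is harmless.
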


\begin{proof}
Consider the lower bound ${\check T}$, which is derived from the upper bound for link rate, i.e., $\hat r_{i\Cc}$. In the LP dual (\ref{eq:dual}), we change the coefficient $r_{i\Cc}$ in (\ref{eq:dualcons}) from the actual value of the rate to its upper bound $\hat r_{i\Cc}$. As $r_{i\Cc} \leq \hat r_{i\Cc}$, it can be easily verified that any solution of the problem with the new rate setting is also a feasible solution of the original problem. That is, the latter has a larger solution space and thus a greater or equal objective value, which equals to the optimal schedule length due to the strong duality of LP. Hence we conclude ${T^*} \geq {\check T}$. 

Following the similar approach, we get ${T^*} \leq {\hat T}$, so the conclusion follows.
\end{proof}

Note although the schedule solution derived from the lower bound may not be attainable for the actual network, it contributes to tighten the scope of optimal solution.  For the networks that cannot be exactly mapped to the MCCR model, as proved in previous work, the MTSP is generally hard. Therefore, upper and lower bounds are critical on defining the optimum region and evaluating any heuristic result. 

\subsection{Improvement on the Upper Bound}
\label{sec:vd}

The upper bound ${\hat T}$ can be further improved as the follows. Start with the optimal solution of ${\hat T}$ and its corresponding activated group set ${\hat\CH} = \{\Cc \in \CH: {\hat t}_\Cc > 0\}$, if we back to the original rate setting and activate all the groups in ${\hat\CH}$, according to the rate bound, $r_{i\Cc}$ is either increased or kept the same value. In the case that some of the queues become empty before $\hat t_{\Cc}$, we will continue with the new group $\Cc'$, which contains all none-zero queues in this group $\Cc$, until all the queues are empty. Clearly, $\Cc'\subset \Cc$, by the monotonicity property of rate, $r_{i\Cc} \leq r_{i\Cc'}$. Thus to empty the same demand as $r_{i\Cc}{\hat t}_{\Cc}$, less or at most the same time duration is needed. In this way, we construct a feasible solution of the MTSP with true link rates and observe its scheduling length ${\hat T'} \leq {\hat T}$. As $T^*$ is the optimal solution, implying $T^*\leq {\hat T'}$, we get $T^*\leq {\hat T'} \leq {\hat T}$, showing that ${\hat T'}$ is a tighter upper bound for $T^*$.

\section{Simulation Setup and Results}
\label{sec:simulation}

\subsection{Simulation Setup}

We consider wireless networks with $N$ transmitters randomly placed in a square area of  $1000 \times 1000$ meters with receivers concentrate at the centre. All the links are activated with a given power of $30$ dBm. The background noise is set to $-100$ dBm. The wireless signal propagation follows a distance-based model with a path loss exponent of 4, thus the distance between each pair of transmitter and receiver, i.e., the length of a link, is restricted between 3 and 250 meters to obtain links of practically meaningful signal-to-noise ratio (SNR) values. For the rate values, we utilize the Shannon function, assuming interference from concurrently activated links as noise.

\subsection{Simulation Results for Different Values of $K$}

The parameter $K$ stands for the total number of clusters that the links are divided into. For wireless networks with $N = 15$ links and the links have uniformly random queue sizes of [100, 1500] bits, we performed the simulations with $K= \{ 1, ~2, ~3, ~4, ~5,~6 \}$ respectively. For each setup, 100 instances are run and the results are normalized with respect to the global optimal solution of LP in (\ref{eq:lp}), which is obtained by AMPL \cite{AMPL02}. 

Figure \ref{fig:kk} presents the average upper and lower bounds. As expected, enlarging $K$ improves the performance of the result. Both upper and lower bounds become tighter with the increase of $K$. The trend is more noticeable for the first three values of $K$. For $K=4$ and beyond, the average gap between the lower and upper bounds is less than 9 percent and the improvement caused by $K$ is not pronounced as before. The observation implies that $K$ can be set to a relatively small number in relation to $N$, which means we can get satisfactory results with low computational complexity.

\begin{figure} [ht!]	
    \centering
{\includegraphics[width=120mm]{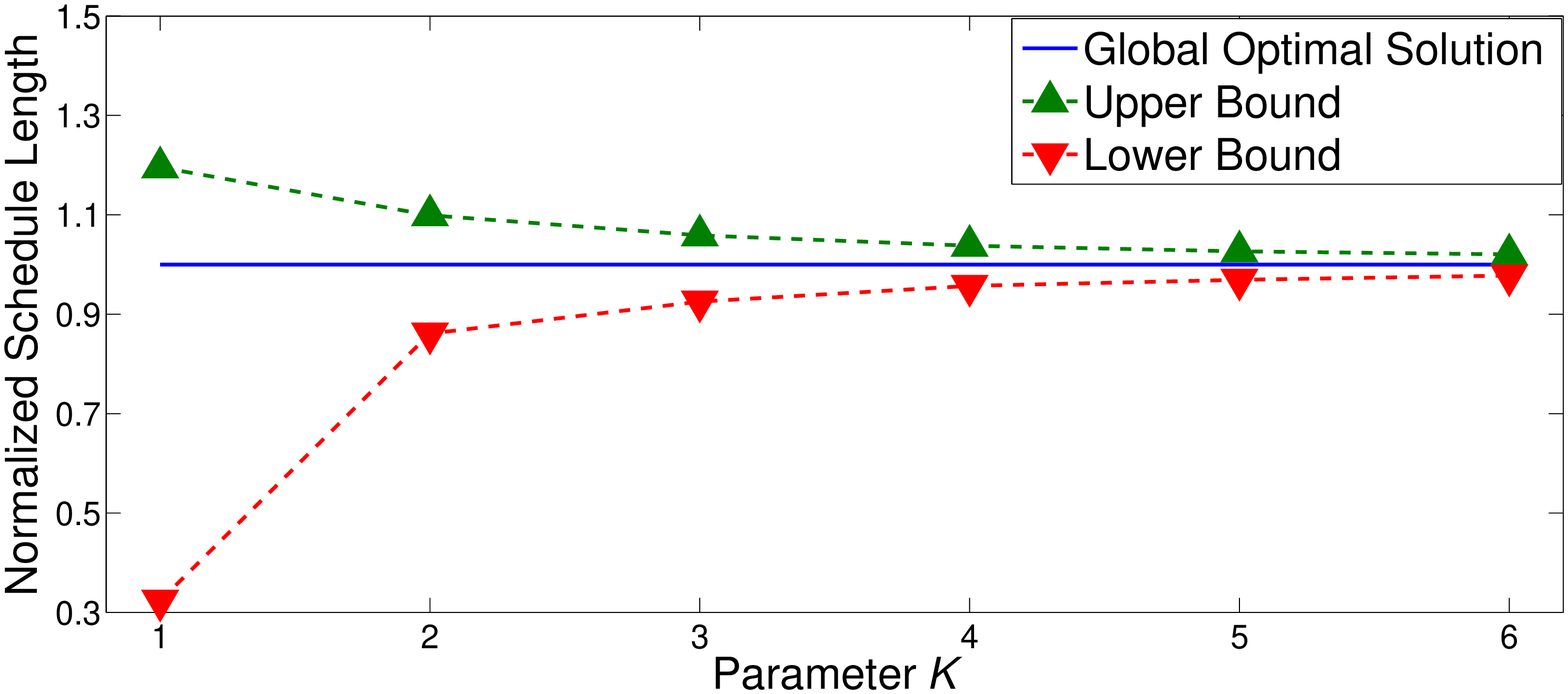}}
\vspace{-5mm}
\caption{Effect of $K$ on upper and lower bounds of schedule length.}
\label{fig:kk}
\end{figure}

We use the mean value of link length in each cluster, i.e., $\mu^*$, to calculate the channel gains and consequently, the link rates. This approach provides an approximate result of the optimal solution. Figure \ref{fig:cdfk} shows the empirical cumulative distribution function (CDF) of the normalized results with different values of $K$. For any group and activated link, the rate derived from $\mu^*$ can be either larger or smaller than its true value, thus the result of this approach is possibly less or greater than the true optimum (1.0) in schedule length. Overall, we can see the approach provides an excellent approximation of the global optimal solution. And, with the increase of $K$, the approximate results have less deviation from the optimal schedule length. For all the 100 instances, the optimality gap is no more than 3 percent when $K \geq 3$.

\begin{figure} [ht!]	
    \centering
{\includegraphics[width=120mm]{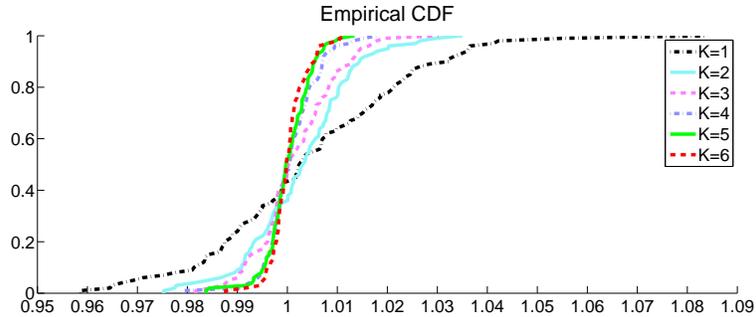}}
\vspace{-5mm}
\caption{Empirical CDF of normalized approximate schedule length.}
\label{fig:cdfk}
\end{figure}

\subsection{Simulation Results of Medium-Sized Networks}

We performed the proposed setup in medium-sized networks with $N = 30$ links, for which the true optimum can be hardly achieved by solving the LP in (\ref{eq:lp}) due to the exponentially increased complexity. The links have uniformly random queue sizes of [100, 3000] bits. 
We set $K=6$ and 100 instances are tested. The results are presented in Figure \ref{fig:30lk}. Herein, the lower bounds are normalized to 1.0 and the other values are set in relation to the lower bounds.

\begin{figure} [ht!]	
    \centering
{\includegraphics[width=120mm]{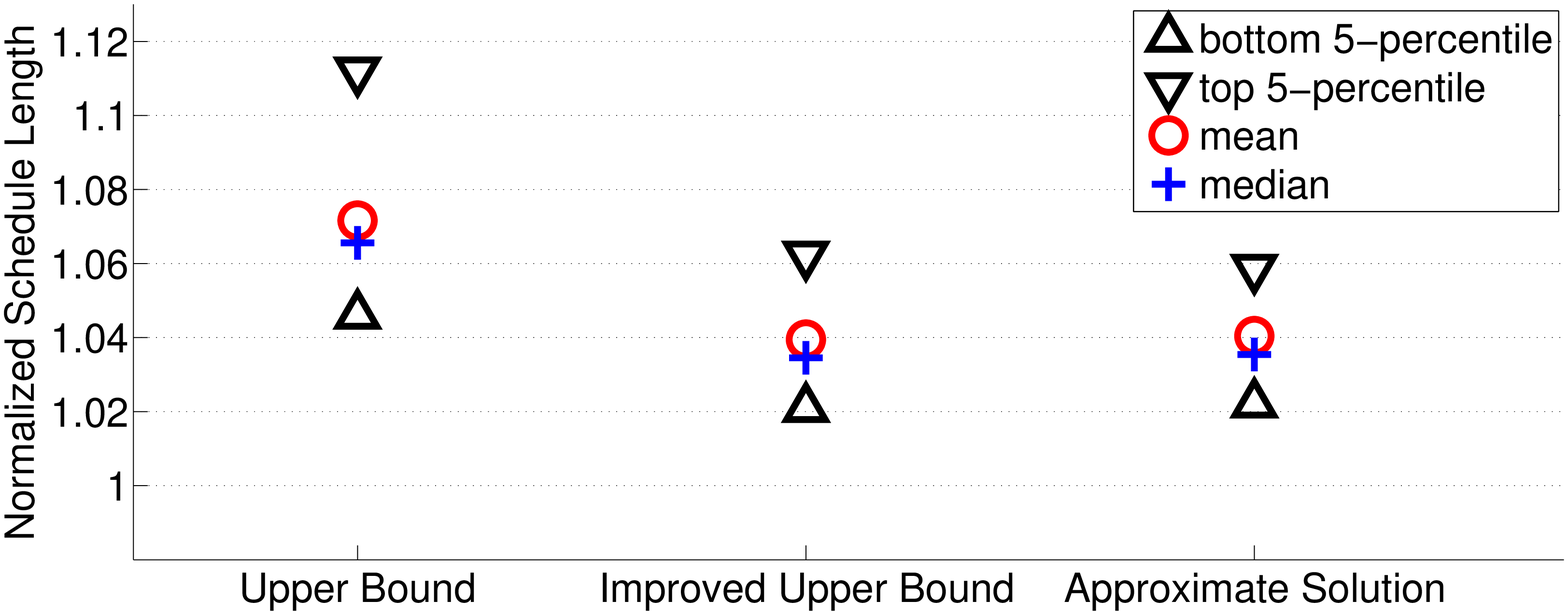}}
\vspace{-5mm}
\caption{Simulation on wireless networks with 30 links.}
\label{fig:30lk}
\end{figure}


The results show that the approach performs well on providing tight lower and upper bounds. The average gap between upper and lower bound is 7.2 percent. With using improved upper bounds in Section \ref{sec:vd}, the gap can be further decreased to 4.0 percent in average. The approximate solutions are very close to the relative lower bounds, with the gaps ranging from 2.2 to 5.9 percent for 90 instances of the 100.

\section{Conclusions}
\label{sec:conclusion}

We have provided fundamental insights on the minimum-time scheduling problem with multi-cluster cardinality-based rates, showing it can be solved in polynomial time and presenting conditions for problem decomposition. Then the model has been extended to more general wireless networks with k-means clustering and a column generation algorithm has been designed. We have applied the proposed approach to exactly solve the problem or provide a good approximation, as well as the lower and upper bounds for the true optimum. Numerical results have been provided to evaluate the performance of the approach. It has been demonstrated that, with a moderate number of cluster in the approximation, the optimality gap of no more than 4.0 percent on average can be reached. 

\appendices

\end{document}